\documentclass[a4paper]{article}

\usepackage{amsmath}
\usepackage{amsfonts,
            amssymb, amsthm}            % for the mathematical environments
\usepackage{amsrefs}                    % for the proper and better references
\usepackage{paralist}                   % for the inline lists
\usepackage{booktabs}                   % for beautier tables (above line, middle line, bottom line)
\usepackage{multirow}                   % for table cells that spans for multiple rows
\usepackage{algorithm}                  % for the algorithm environment
\usepackage{graphicx}

%%%%%%%%%%%%%%%
% PGF/TikZ Options
% \usetikzlibrary{shapes.geometric}       % for the diamond shape
% \usetikzlibrary{positioning}            % for the "=of" relative position
% \usetikzlibrary{quotes}                 % for adding labels to the edges
%%%%%%%%%%%%%%%

\newtheorem{definition}{Definition}
\newtheorem{theorem}{Theorem}

\everymath{\displaystyle}

\newcommand{\mult}{\mathrm{mult}}
\newcommand{\ie}{\emph{i.e.}}
\newcommand{\eg}{\emph{e.g.}}

%%%%%%%%%%%%%%%
%To Do code
\usepackage{xcolor}
\newcounter{ToDoCounter}

%%%%%%%%%%%%%%%

\begin{document}

\title{The Computational Universality of Metabolic Computing}
\author{Ricardo~Henrique~Gracini~Guiraldelli~
        and~
        Vincenzo~Manca\\
        University of Verona\\
        \url{{ricardo.guiraldelli, vincenzo.manca}@univr.it}}

\maketitle

\begin{abstract}
    System and synthetic biology are rapidly evolving systems, but both lack tools such as those used in engineering environments to shift the their focus from the design of parts (details) to the design of systems (behaviors); to aggravate, there are insufficient theoretical justifications on the computational limits of biological systems. To diminish these deficiencies, we present theoretical results over the Turing-equivalence of metabolic systems, defines rules for translations of algorithms into metabolic P systems and presents a software tool to assist the task in an automatic way.
\end{abstract}

\section{Introduction}
\label{sec:introduction}

There is an increasing interesting, in the academic community, in the modeling of existing biological cells as well as synthesis of new ones. Systems biology and synthetic biology are parallelly developing themselves with great speed but, in some sense, apart and lacking mutual synergy. Nonetheless, a duality between these fields may be established and both may benefit of the same set of mathematical, computer science and engineering techniques to improve their analysis or design tasks; for instance, \emph{molecular computation}~\cite{Sarpeshkar2010}*{Section~20.7.5}, in which the present work is certainly part of, is one of those tools.

Through the usage of the Church-Turing thesis~\cite{Lewis1997}*{Section~5.1}~\cite{Sipser2012}*{p.~183}~\cite{Cooper2004}*{Section~2.5} and algorithmic construction, it will be shown that a particular kind of Metabolic P (MP) systems~\cite{Manca2013}*{Chapter~3} called \emph{positively controlled MP systems (MPPC)} has the same computational power of a \emph{register machine} and, hence, of a Turing machine, along with a demonstration of the equivalence relation between these models. Then, an outline of the software tool that converts a register machine to MPPC is presented. At last, there is a discussion on the importance of the results of this work as a connector between systems and synthetic biology and introduction of a computational perspective over these fields is stressed.

For this purpose, the present text is divided as follow: Section~\ref{sec:computational-devices} and Section~\ref{sec:mp-systems} introduces the bases of universal computing devices and MP systems; Sections~\ref{sec:mr-to-mppc}~and~\ref{sec:translation-strategy} develop the equivalence of the both systems. Section~\ref{sec:software} describes the developed tool and presents an example of translation of register machine to MPPC grammar. Finally, at the last section, we discuss the advantages of the present formalism to both systems and synthetic biology.
% Computability and Turing Machines
\section{Universal Computing Devices}
\label{sec:computational-devices}

The concept of (universal) \emph{computability} is closely tied to concept of \emph{algorithm}~~\cite{Lewis1997}*{p.~246} that, in turn, is tied with the concept of \emph{Turing machine}~\cite{Lewis1997}*{p.~246}~\cite{Sipser2012}*{Definition~5.17}, which may be considered the main computational device and stands out by its formal and construction simplicity.

Although simple, there are other computational devices equivalent to Turing machine that are more convenient to use at certain modeling occasions. Some examples of equivalent devices are \emph{recursive functions}, \emph{grammars} and \emph{register machine}. In the present work, a particular kind of the latter will be used due to its simple instruction set and similarity to electronic computer architecture.

% Register Machines
\subsection{Register Machines}

According to Minsky~\cite{Minsky1967}*{Section~11.1}, a \emph{register machine} (or \emph{program machine}) is composed of finite number of \emph{registers} which has infinity capacity, a small set of \emph{operations} over the registers and a hard coded \emph{program}, which seems an indexed sequence of \emph{instructions}. At last, an instruction is defined as a triple (\emph{operation}, \emph{register reference}, \emph{list of instructions}).

The basic operations of a register machine must reflect those ones that defines recursive function, the (free) movement of the head of the Turing machine across the tape, a signal of end-of-computation and limit its operations to the set \(\mathbb{N}\). Hence, the base register machine model in~\cite{Minsky1967}*{Table~11.1-1} define four operations: \emph{zero}, \emph{successor}, \emph{decrements or jump} and \emph{halt}.

The present work, nevertheless, uses a variation~\cite{Shepherdson1963}*{Section~4, p.~225}~\cite{Manca2012b}*{p.~293} of the register machine defined by Minsky~\cite{Minsky1967}*{Section~14.1} as following seen.

\begin{definition}[Register Machine]
    A register machine \(\mathcal{R}\) is a computational device defined as
    \[ \mathcal{R} = (R, O, P) \]
    where:
    \begin{enumerate}
        \item \(R = \{R_1, R_2, \ldots, R_n\}\) is a finite set of infinite capacity registers, with \(n \in \mathbb{N}\);
        \item \(O = \{\mathtt{INC}, \mathtt{DEC}, \mathtt{JNZ}, \mathtt{HALT}\}\) is the set of operations;
        \item \(P = (I_1, I_2, \ldots, I_n)\) is the program, with \(n \in \mathbb{N}\).
    \end{enumerate}
    The execution of the program \(P\) always start at the first instruction \(I_1\) and procedures sequentially (unless for programmed execution re-route). 
\end{definition}

The instructions \(I\) of the register machine \(\mathcal{R}\) are special notations that conveniently name operations over addressed registers of \(\mathcal{R}\) according to Definition~\ref{def:instructions}.

\begin{definition}[Instructions]
    Let the content of register \(R_i\) be equal to \(x\). Then, it is possible to define the instructions \(I\) of the register machine \(\mathcal{R}\) as following:
    \begin{enumerate}
        \item \(\mathtt{INC(R_i)} = x + 1\);
        \item \(\mathtt{DEC(R_i)} = \begin{cases} x - 1 &, \text{ if } x > 0\\0 &, \text{ otherwise}\end{cases}\);
        \item \(\mathtt{JNZ(R_i, I_j)}\) change the execution flow of \(\mathcal{R}\) setting \(I_j\) as the next instruction to be executed in case \(x > 0\); otherwise, the execution flow keeps sequential;
        \item \(\mathtt{HALT}\) ends the computation of \(\mathcal{R}\).
    \end{enumerate}
    \label{def:instructions}
\end{definition}

A register machine, as seen above, is a simple but powerful computational device easily translate to digital computer architecture. Now it is time to examine that one inspired by cell metabolism, the \emph{MP systems}.
% MP systems
\section{Metabolic P Systems}
\label{sec:mp-systems}
%% Introduction
%%% Cell as a box
The cell may be seen a small dimension but complex and dynamical system limited by a membrane which separates the external world from the internal cellular machinery. Acting as an interface, this membrane selectively collects molecules from surroundings of the cell and expels others that were produced or refused inside it.

The process of material exchange interfaced by the membrane has a parallel with systems' theory, in which the cell works as a box (isolated system) for transformation of substances, suggesting the existence of a computational process inside it. Gheorge P\u{a}un, aware of this mechanism, proposed a computational model called P system~\cite{Paun2010} (precursor of \emph{membrane computing}) which is based on membrane interaction in cell systems but, at the same time, mathematically formal and consistent, using the concepts of multiset and rewriting systems for the construction of its formal framework.

Although full of new and interesting ideas for biological modeling, P system still presents mechanism too tied to formal languages that forbids its usage to model real-world metabolic and intra-cellular interaction~\cite{Manca2005}*{p.64}.
%%% MP System as improved P Systems
Attentive to the necessities of bio-modeling, a new membrane computing computational model based on this system which is named \emph{Metabolic P system} or, for short, \emph{MP system}~\cite{Manca2005}~\cite{Manca2013}*{Chapter~3}. 

The primary goal of the MP system is to deterministically model metabolic processes, serving as a powerful (discrete) mathematical tool for expressing and supporting biological studies in the cell magnifying level; also, it meant to be a computational ``intermediate language'' for easy simulation of the formalized models; at last, it should use promptly understandable notation for potential users unfamiliar with the theoretical computer science jargon commonly found in new computational models.

Strongly influenced by chemical reactions, MP system has a reaction-like notation---that can be seen, also, as a \emph{formal grammar} one---supported by recurrence equations and shifts the focus from pointwise string rewriting to a population transformation through the usage of conventional mole concept (as in chemistry). By the other side, its dynamics is supported by linear algebra and relies on matrix operations for solving the recurrence equation system that characterizes the MP system as a (discrete) dynamical one.
%% Definitions
In technical jargon, an MP system can be mathematically represented using the support of a kind of formal grammar (named \emph{MP grammar}). As a grammar object, the MP grammar defines all the rules of an MP system, including the process of multisets (through rules and functions), the elements allowed in the multisets and the initial state of the systems~\cite{Manca2013}*{p.~108}.

\begin{definition}[MP grammar]
    An MP grammar \(G\) is a generative grammar for time series defined as
    \[ G = (M, R, I, \Phi) \]
    where:
    \begin{enumerate}
        \item \(M = \left\{ x_{1}, x_{2}, \ldots, x_{n} \right\}\) the finite set of substances (or metabolites), and \(n \in \mathbb{N}\) the quantities of substances.
        \item \(R = \left\{ {\alpha}_{j} \rightarrow {\beta}_{j} \ \vert \ 1 \leq j \leq m \right\}\) the set of rules (or reactions), with \({\alpha}_{j}\) and \({\beta}_{j}\) multisets over \(M\), and \(m \in \mathbb{N}\) the number of reactions.
        \item \(I = \left(x_{1}[0], x_{2}[0], \ldots, x_{n}[0]\right)\) is the vector of initial values of substances or the metabolic state at initial step (step 0).
        \item \(\Phi = \left\{ {\varphi}_{1}, {\varphi}_{2}, \ldots, {\varphi}_{m} \right\}\) is a set of functions (also called regulators), in which every \({\varphi}_{j} : \mathbb{R}^{n} \mapsto \mathbb{M}\), for \(1 \leq j \leq m\), is associated with a rule \(r_{j} \in R\).
    \end{enumerate}
    \label{def:mp-grammar}
\end{definition}

According to definition~\ref{def:mp-grammar}, \(G\) generates (a set of) time series, each of them representing the ``amount of quantity'' of the substances during the time and its time series is calculated for any time \(t\) if and only if \(\displaystyle \frac{t}{\tau} \in \mathbb{N}\) for a given constant \(\tau\).

Notwithstanding, the rules \({\alpha}_{i} \rightarrow {\beta}_{i} \in R\) depends, as equivalently happens in chemical reactions, on the quantities of the ``substances'' in the system, which can be expressed with the support of two different concepts: one that maps the multiplicity expressed in the rule for a substance to the actual number of molecules (of the substance) in the system, and; the quantity of mass the unit of the multiplicity represents (for a particular substance).

Hence, if the aforementioned three restrictions are that in consideration along with an MP grammar \(G\), it is formally defines a \emph{MP system}.

\begin{definition}[MP system]
    A \emph{MP system} \(M\) is a discrete dynamical defined as
    \[ \mathcal{M} = (G, \tau, \nu, \mu) \]
    with
    \begin{enumerate}
        \item \(G\) being an MP grammar following the definition~\ref{def:mp-grammar};
        \item \(\tau \in \mathbb{R}\), the period (amount of time) of a computational step;
        \item \(\nu \in \mathbb{R}\), the number of molecules that represents the (conventional) mole in the system;
        \item \(\mu \in \mathbb{R}^{n}\) is the vector of the mole masses of substances. 
    \end{enumerate}
    \label{def:mp-system}
\end{definition}

As described in definition~\ref{def:mp-system}, the quantities of the substances are dependent of its previous values and a variational function that may depend on other substances and parameters (in the case of the parametric MP system). This additional variance through time is represented as a recurrent equation which the future value of a substance \(X\) is represented as \(x[i+1] \propto x[i]\).

For the computation of these step values, nonetheless, two mathematical accessories were developed. The first, the \emph{stoichiometric matrix}, is based on the arithmetic executed over chemical reactions to calculate the balance of molecules in a chemical system; the other, \emph{equational metabolic algorithm}, synthesizes the whole computational process specified by the an MP system.

\begin{definition}[Stoichiometric matrix]
    Let \(r_{i} = {\alpha}_{i} \rightarrow {\beta}_{i}\), where \({\alpha}_{i}\) (with an equivalent for \(\beta_{i}\)) is represented as \(\displaystyle \sum{{k}^{+}_{i,j} \times {X}_{j}}\ \vert\  {k}_{i,j} \in \mathbb{N} \wedge X_{j} \in M\).

    Let \(\mult^{+}(X_{j}, r_{i}) = {k}^{+}_{i,j}\) be the multiplicity, for the right side (\({\alpha}_{i}\)) of the rule \(r_{i}\), of the substance \(X_{j}\) in the rule. Similarly, there is \(\mult^{-}(X_{j}, r_{i}) = {k}^{-}_{i,j}\) for the left side (\({\beta}_{i}\)) of the rule.

    A stoichiometric matrix \(\mathbb{A}\), of dimension \(\vert M \vert\ \times \vert R \vert\), has each of its elements defined by
    \[ a_{l,m} = \mult^{+}(X_{l}, r_{m}) - \mult^{-}(X_{l}, r_{m}) \]
    with \(, 1 \leq l \leq \vert M \vert\) and \(1 \leq m \leq \vert R \vert\).

    \label{def:stoichiometric-matrix}
\end{definition}

\begin{definition}[Equational metabolic algorithm---EMA]
    Let \(U[i] = \left( {\varphi}_{1}\left( i \right), {\varphi}_{2}\left( i \right), \ldots, {\varphi}_{m}\left( i \right) \right)^{T}\) be the vector of values, in the time step \(i\), of all regulators, and \(\mathbb{A}\) the stoichiometric matrix.

    The \emph{vector of substance variation at step \(i\)}, \({\Delta}[i]\), is computed by the equation
    \[ {\Delta}[i] = {\mathbb{A}} \times U[i] \]
    so-called \emph{Equational Metabolic Algorithm} whom computes the value of any substance in the time future time step \(i+1\) through the recurrent equation
    \[ X[i+1] = X[i] + {\Delta}[i] \]
    \label{def:ema}
\end{definition}

The above definitions, now, complete specifies the discrete dynamical system and the ways to compute its values.

%% MP positively controlled
\subsection{Positively Controlled Metabolic P System}
The cell, motto of the MP systems, keep itself working based on metabolic rules. Those, of biochemical equations nature, require enough quantity the metabolites on the left-hand side of their equations to transform the matter and, hence, keep the metabolic circuitry active. Thus, if enough quantity of all metabolites are provided for all equations, the metabolic circuitry works properly; in the total absence, it does not work. However, in the case which a subset of rules do not have enough quantity of some metabolites, these rules are inactivated while the others keep their activities normally. As a result, the metabolic circuitry of the cell keeps working with a broken chain of reactions.

This kind of operation on cell metabolism happens because the (mathematical) operations in cells (and bio-chemistry, in general) happens in the space of natural numbers, which implies that \emph{half (decimal) quantities} or \emph{debt (negative) quantities} cannot be considered in the (left-hand side of the) rules.

In the MP systems world, the above cell-like behavior is defined in a subclass called \emph{Positively Controlled Metabolic P} systems, or \emph{MPPC}: it is a standard MP systems with few controls to ensure all operations are executed in the set of positive numbers.

\begin{definition}[MPPC Grammar]
    A MPPC grammar \(\mathcal{G}_{+}\) is a standard MP grammar \(\mathcal{G}\) respecting the following restrictions, at each computational step \(s_i\):
    \begin{enumerate}
        \item \(\varphi{\vert}_{s_i} = \begin{cases}\kappa &, \text{ if } \kappa \geq 0\\0&, \text{ otherwise}\end{cases}\), for all \(\varphi \in \Phi\);
        \item \({\sum}_{\varphi{\vert}_{s_i} \in {\Phi}_{x}^{-}}{\varphi} \leq x\), with \({\Phi}_{x}^{-}\) the set of all fluxes related to rules in which the metabolite \(x\) is in the left-hand side of the rule; otherwise, \(\varphi = 0, \forall \varphi \in {\Phi}_{x}^{-}\) at the execution step.
    \end{enumerate}
    \label{def:mppc}
\end{definition}

As it will be seen later in the next sections, this definition is useful not only for cell modeling, but also for establishing a comparison with universally computational devices.
\section{Computational Architecture as Metabolic Systems}
\label{sec:mr-to-mppc}

One of the goals of synthetic biology is to produce programmable metabolic systems that could, just as a designed machine, perform well-defined tasks for a certain objective. Some successful approaches of computer-aided design of biological systems already exists~\cite{Beal2011a}, but they are limited by the expressiveness of digital gates.

Theoretically, nonetheless, we are possible to synthesize metabolic systems more significant (in computational power) than those of Boolean circuits. For the present purpose, a programmable  and Turing-potent (register) machine is implemented in a metabolic P system.

\begin{theorem}[Translation of Register Machine to Positively Controlled MP grammar]
    For any register machine \(\mathcal{R}\) exists an equivalent positively controlled MP grammar \(\mathcal{G}_{+}\).
    \label{thm:mr-to-mppc}
\end{theorem}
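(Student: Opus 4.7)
The plan is a direct simulation: given a register machine $\mathcal{R} = (R, O, P)$, construct an MPPC grammar $\mathcal{G}_{+}$ whose time series reproduces the trace of $\mathcal{R}$ step by step. For each register $R_i$ introduce a metabolite $x_i$ whose value at step $s$ equals the content of $R_i$ after executing the first $s$ instructions; for each instruction $I_j$ of the program $P$ introduce a ``program counter'' metabolite $p_j$. The initial vector assigns each $x_i$ its initial register value, sets $p_1 = 1$, and sets $p_j = 0$ for all $j > 1$. The invariant to maintain throughout the derivation is that exactly one $p_j$ equals $1$ at every step, encoding which instruction of $\mathcal{R}$ is about to be executed.

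Next, associate one or two rules with each instruction, gated by regulators that evaluate to $p_j$ whenever instruction $j$ is current and to $0$ otherwise. For $I_j = \mathtt{INC}(R_i)$, the rule $p_j \rightarrow p_{j+1} + x_i$ with regulator $\varphi_j = p_j$ both increments the register and advances the counter. For $I_j = \mathtt{DEC}(R_i)$, pair the rule $p_j + x_i \rightarrow p_{j+1}$ with a companion rule $p_j \rightarrow p_{j+1}$ whose regulator fires only when $x_i = 0$; the saturating semantics of Definition~\ref{def:instructions} then arises automatically, because clause~2 of Definition~\ref{def:mppc} suppresses the consuming rule precisely when $x_i$ is insufficient. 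For $I_j = \mathtt{JNZ}(R_i, I_k)$, use two complementary rules $p_j \rightarrow p_k$ and $p_j \rightarrow p_{j+1}$ whose regulators implement a zero/nonzero test on $x_i$; both tests can be expressed using only the clipping-to-nonnegative operation of clause~1 of Definition~\ref{def:mppc}, for example by writing the ``$x_i = 0$'' indicator as $\max(0, p_j - x_i)$ and the ``$x_i > 0$'' indicator as $\min(p_j, x_i)$. For $I_j = \mathtt{HALT}$, install no rule acting on $p_j$ so that the system becomes quiescent from then on.

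Correctness follows by induction on the number of executed instructions. Assuming the invariant at step $s$ together with the equality of each $x_i$ with the content of $R_i$, the EMA of Definition~\ref{def:ema} applied to the stoichiometric matrix of the rules above produces exactly the $(x_i, p_j)$-vector corresponding to the state of $\mathcal{R}$ after one further instruction; the positive-control clauses guarantee that the updates remain in $\mathbb{N}$ and that no spurious rule is activated by a ``ghost'' program counter.

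The main obstacle is not the arithmetic of INC, HALT, or the default branch of JNZ, but rather the uniform encoding of the zero-test that both DEC and JNZ require. Care must be taken that the two companion rules of DEC (and similarly the two branches of JNZ) never both fire in the same step, which is precisely the content of the positive-control constraints combined with the clipping identity above; one must also check that the reactions for instruction $j$ contribute nothing to the stoichiometric update when $p_j = 0$, so that the many dormant branches of $P$ do not interfere with the currently active one. Once this gating is correctly engineered, the invariant is preserved, the simulation is step-faithful, and the theorem follows.
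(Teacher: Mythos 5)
Your proposal is correct and follows the same overall strategy as the paper's proof: a step-by-step simulation in which each register becomes a metabolite, each instruction gets a program-counter metabolite with only \(I_1\) initialized to \(1\), every rule is gated by a regulator proportional to its counter, and the zero-tests required by \(\mathtt{DEC}\) and \(\mathtt{JNZ}\) are delegated to the positive-control clauses via clipped subtraction. The differences lie in how the two delicate instructions are encoded. The paper spreads \(\mathtt{JNZ}(R_i, I_k)\) over two MP steps, introducing an auxiliary ``virtual comparison'' metabolite \(L_j\) and four rules (\(I_j \rightarrow L_j : I_j\); \(L_j \rightarrow I_k : L_j - I_{j+1}\); \(L_j \rightarrow \emptyset : I_{j+1}\); \(\emptyset \rightarrow I_{j+1} : I_j - R_i\)), so that every regulator is just an identity or a difference clipped by clause 1 of Definition~\ref{def:mppc}; you compress the branch into a single step using the complementary pair \(\max(0, p_j - x_i)\) and \(\min(p_j, x_i)\), which avoids the auxiliary metabolite and keeps the simulation step-faithful (one MP step per machine step) at the price of a slightly richer regulator (note that \(\min\) is not itself the clause-1 clipping, though it is admissible since regulators may be arbitrary functions). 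For \(\mathtt{DEC}\), the paper keeps the counter advance \(I_j \rightarrow I_{j+1} : I_j\) separate from the consumption \(R_i \rightarrow \emptyset : I_j\), so that suppression of the latter at \(R_i = 0\) cannot stall the counter; your combined rule \(p_j + x_i \rightarrow p_{j+1}\) would stall, and you correctly repair this with the companion rule gated by the \(x_i = 0\) indicator. Both constructions establish the theorem; the paper's buys minimal regulators, yours buys fewer metabolites and exact step correspondence.
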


\begin{proof}[Proof of Theorem~\ref{thm:mr-to-mppc}]
    Given a register machine \(\mathcal{R} = (R, I, P)\) with \(|R| = r\) and \(|P| = p\), a positively controlled MP grammar \(\mathcal{G}_{+} = (M, Ru, I, \Phi)\) is constructed 
    \begin{enumerate}
        \item adding a metabolite \(R_i\) in the set \(M\) for each register \(R_i \in R\);
        \item adding a metabolite \(I_j\) in the set \(M\) for each of the instructions in \(I_j \in P\);
        \item adding a metabolite \(L_j\) in the set \(M\) for each instruction \(I_j \in P\) of the type \(\mathtt{JNZ}\);
        \item adding a \(HALT\) metabolite in the set \(M\);
        \item defining the initial state of the metabolites \(R_j\) equal to the initial values of the registers \(R_j\), the initial values of all the other metabolites to \(0\) and the initial value of \(I_1\) to \(1\);
        \item adding the rules to \(Ru\) and the fluxes to \(\Phi\) according to the following rules:
          \begin{enumerate}
            \item if \(I_j\) is \(\mathtt{INC}\) or \(\mathtt{DEC}\), then \(I_j \rightarrow I_{j+1} : I_j\);
            \item if \(I_j\) is \(\mathtt{INC(R_i)}\), then \(\emptyset \rightarrow R_i : I_j\);
            \item if \(I_j\) is \(\mathtt{DEC(R_i)}\), then \(R_i \rightarrow \emptyset : I_j\);
            \item if \(I_j\) is \texttt{HALT}, then \(I_j \rightarrow HALT : I_j\);
            \item if \(I_j\) is \(\mathtt{JNZ(R_i, I_k)}\), then 
            \begin{enumerate}
                \item \(I_j \rightarrow L_j : I_j\);
                \item \(L_j \rightarrow I_k : L_j - I_{j+1}\);
                \item \(L_j \rightarrow \emptyset : I_{j+1}\); and,
                \item \(\emptyset \rightarrow I_{j+1} : I_j - R_i\).
            \end{enumerate}
        \end{enumerate}
    \end{enumerate}

    From the rules above, it is possible to notice that \(I_j\) and \(L_j\) instructions controls the execution flow of the system and satisfies \[ HALT + {\sum}_{j = 1}^{p}{I_j} = 1 \] \[ 0 \leq {\sum}_{L_j \in M}{L_j} \leq 1 \] ensuring no two instructions are executed at the same time, but its execution starts from instruction \(I_1\) and proceeds sequentially (or jumps to another one in case of a satisfying \texttt{JNZ} instruction).

    All operations are mappings from and to the \(\mathbb{N}\) set once both \(\mathcal{R}\) and \(\mathcal{G}_{+}\), by definition, restrict their operations to this set.

    At last, when a rule \(I_j \rightarrow HALT\) is performed, the system is stuck in a fixed point since there is no rules for ``exiting'' this state. 
\end{proof}
\section{Translation Strategy}
\label{sec:translation-strategy}

Theorem~\ref{thm:mr-to-mppc} defines the relation between register machine and positively controlled metabolic P system, but it does not develop the intuition for its reasons. For a complete comprehension of this equivalence and, consequently, its impact in correlated fields, this section will focus on the constructive analysis of a MPPC from a register machine (just as proposed by Theorem~\ref{thm:mr-to-mppc}) and discuss the challenges of this procedure.

\subsection{Is It Possible To Translate Register Machines Into MP Systems?}

A register machine is a general purpose computational device presenting the same computational power of a Turing machine; it allows the implementation of any algorithm and can deal with languages in the whole spectrum of the Chomsky hierarchy~\cite{Lewis1997}*{p.~272}. By the other side, MP systems have been show to be equivalent to Petri networks~\cite{Castellini2010} and have been used to model complex systems, from (originally) metabolic systems to mathematical series~\cite{Manca2013}*{Chapter~3}; however, MP systems are not proved to be as powerful as a register machine and an equivalence between these systems cannot be set yet.

Nevertheless, MP systems presents a series of characteristics that induces to the idea of an existing equivalence between both systems. It is important to note that these features, by themselves, do not ensure the existence of the any kind of relation between the formalisms, but solely encourages the investigation through a number of theoretical arguments.

At first, \emph{MP systems are (a particular kind of feedback) dynamical systems}; therefore, it receives a set of states, manipulates them and feeds them back to itself in order to keep the computation going on. It is a computational device with proved restricted power~\cite{Castellini2010}, but with potential for increasing it~\cite{Hinrichsen2005}~\cite{Siegelmann1998} particularly because it does not restrict the set of functions it may works on---\ie, the fluxes may be any computable function according to its definition.

Then, both systems works in a discrete representation of time, computing the elements in well-defined steps instead of continuously. Each step executes a finite number of instructions in a finite amount of time and, more, with a finite representation (\ie, there is no guarantee of infinite precision for every object computed). Although these limitations on time are not strong enough to ensure the equivalence (after all, both Turing machines and finite state machines are discrete but not equivalent), it prevents analyzes that goes beyond the Turing equivalence, named super-Turing~\cite{Gheorghe2011}~\cite{Siegelmann1998}.

At last, MP systems have representations with origins on (or inspired by) formal languages (grammar), rewriting systems and graph theory (MP graph~\cite{Manca2013}*{p.~109 and Figure~3.1}). Not enough, it is part of membrane computing, a wide field with other devices as powerful as (or even more than) the Turing machine~\cite{Calude2004}*{p.~179} (with properties, though, that make them impracticable for daily basis use).

\subsection{What Is The Strategy To Implement?}
\label{sec:strategy}

In order to show that MP systems are general purpose computational devices as powerful as register machines (and, hence, Turing machines), it is enough to show that it is possible to implement in MP systems the same algorithms accepted by the register machine. And to satisfy the generality of these implementations, it suffices to translate the register machine instructions into MP equivalents and show they work properly for a well-defined subset of all acceptable algorithms in that machine.

Expressions of behavior in MP systems are described solely through its MP grammar~\cite{Manca2013}*{Definition~3.1}, which is defined as a quadruple consisting of sets of metabolites, rules, initial states and regulators (or fluxes). Therefore, the definition of the four elements of a MP grammar defines univocally the operational actions of a system; if the definition is relaxed and the initial states are not specified, the relaxed MP grammar defines a whole class of behaviors instead of a specific one.

The strategy, then, is to restrict the definition of a \emph{relaxed MP grammar that models the whole class of problems accepted by a register machine}, populating the sets that compose it according to the next strategies.

\subsubsection{Metabolites Set}
\label{sec:metabolites-set}

The metabolites set \(M\) in a MP system is the equivalent to \(M = X \cup Y\), the union of the state space \(X\) and output value space \(Y\) of dynamical systems~\cite{Hinrichsen2005}. It declares all the variables (states) of the system, those allowed to change its values through the computation process.

Analyzing the register machine instructions~\cite{Shepherdson1963}*{Section~4, p.~225}~\cite{Minsky1967}*{Chapter~11} from Section~\ref{sec:computational-devices} (the unique procedures that alters the states of a register machine), it is possible to recognize and isolate all the variables of the machine just looking to the addressed properties subject to manipulation. These can be classified as
\begin{itemize}
    \item \emph{Registers.} Three out of the four instructions (\texttt{INC}, \texttt{DEC} and \texttt{JNZ}) refers to registers, either for manipulation or query the data. Registers are the external variables~\cite{Hinrichsen2005}*{pp.~74} of the register machine, the ones of interest for an user that feeds the machine with a program.
    \item \emph{Instruction Pointer.} Each of the instructions of the machine are executed sequentially, one at each computational step; however, this serial behavior can be modified with the usage of the \texttt{JNZ} instruction, which ``jumps'' the next execution step to another indexed instruction if an addressed register has its contents different from zero. And in order to keep the track of the instructions to be performed, there is a special kind of register (called \emph{program counter}) whose content is the index of the instruction to be performed. Normally, its initial value is one and it is automatically incremented by one at the end of the execution of the instruction; however, \texttt{JNZ} is able to change it to whatever (valid) index it is necessary, hence qualifying program counter as a register machine's variable. % rewrite to fit to instruction pointer, not process counter
    \item \emph{Final State.} To indicate the end of a computational process executed by a register machine, there is a special instruction called \texttt{HALT} which signals the machine to stop its execution because the algorithm in process has finished its operation. Hence, \emph{halt} (\emph{final} or \emph{acceptance state}~\cite{Lewis1997}) is a particular that marks the end of the computational process.
\end{itemize}

\subsubsection{Rules Set}

The rules in a MP systems define the relation between the diverse variables of the system, without pay attention to the value update of the variables. Rules, then, characterizes the operation of the system and are presented in \(\alpha \rightarrow \beta\) structure---which can be read as \emph{\(\alpha\) becomes (or transforms itself into) \(\beta\)}.

In the current scope, the definition of the structural behavior of the general MP system that simulates a register machine is done through the observation of both the register machine and the metabolite set: the former is detailed studied through the standpoint of the instruction set and the possible program flows (flowchart) and describes the behavior being reproduced in the new system; the latter provides the description of all the systems' states (variables), the principal constituent of the rules and defines the three subdivisions of behavior of the register machine: \emph{manipulation of the registers}, \emph{manipulation of the flux of the instructions' execution} and \emph{final state of the computational process}.

\paragraph{Manipulation of the Registers} 

The \texttt{INC} and \texttt{DEC} instructions are dedicated to manipulation of the registers, while \texttt{JNZ} uniquely consults their values.

The instructions that alters the values of the registers, consequently, are those to induce changes in (some of) the system's states (the \emph{external} ones~\cite{Hinrichsen2005}*{p.~74}), expressing one of the (main) behavior of the system; therefore, they are modeled in the MP system through particular but very simple rules: they just add to or remove values from addressed registers.

\begin{table}[h!t]
    \centering
    \caption{MP rules for register machine instructions that manipulate registers' values.}
    \begin{tabular}{cc}
        \toprule
        Register Machine Instruction & MP Rule \\
        \midrule
        \(\mathtt{INC(R_i)}\)       & \(\emptyset \rightarrow R_i\) \\
        \(\mathtt{DEC(R_i)}\)       & \(R_i \rightarrow \emptyset\) \\
        \bottomrule
    \end{tabular}
    \label{tab:mp-rules-register}
\end{table}

The first of the rules of Table~\ref{tab:mp-rules-register} can be interpreted as ``from somewhere the register \(R_i\) receives a value'' (and the second one as ``the value of the register \(R_i\) goes to somewhere'') because of the structure of MP rules: inside a delimited environment (\eg, a cell), the \emph{principle of mass conservation} must be preserved and any addition (or removal) of metabolites quantities that breaks this principle must ``come from (or goes to) somewhere''; \emph{somewhere}, in this context, must be understood as \emph{outside the membrane} delimiting the MP system.

\paragraph{Manipulation of the Flux of Instructions}
\label{sec:rules-flux-instruction}

Not all the states of system are registers; those responsible for the sequential execution of the system and the control of its computational workflow, the \emph{instruction pointers}, represents the computational machinery and hide the trickiest translation process between register machine instructions and MP rules. The evaluation of these variables may be divided in two different subclasses: \emph{sequential execution of instructions} and \emph{manipulation of the execution's workflow}.

The first of the subclasses, \emph{sequential execution of instructions}, consists of situations when the token of active instruction is passed from one instruction to another consecutive. Thus, let \(I = \{I_1, I_2, \ldots\}\) be the set of indexed instructions of a register machine representing an algorithm. The computation of the algorithm represented by \(I\) is defined as the sequential execution \(I_1 \vdash I_2 \vdash \ldots\), if and only if \(I_j\) is a instruction of the type \texttt{INC} or \texttt{DEC}.\footnote{Although \texttt{JNZ} instruction also pass the token in a sequential manner, it is not true for every performance of it; therefore, this instruction is excluded of this subclass and is analyzed later.} Then, from the previous statement, it is trivial to define the following conversion rule:

\begin{center}
    \begin{minipage}{0.8\linewidth}
        Let \(I_j, I_{j+1}\) be the \(j^{\text{th}}\) and \((j+1)^{\text{th}}\) indexed instructions, with \(I_j\) belonging to one of the types \texttt{INC} or \texttt{DEC}. Then, there is a MP rule \(I_j \rightarrow I_{j+1}\) that sequentially executes the instruction \(I_j\) in a MP system.
    \end{minipage}
\end{center}

For the above rule, it is important to notice that there is the restriction of the type of instruction (\texttt{INC} or \texttt{DEC}) solely for the \(I_j\) instruction; the \(I_{j+1}\), on the other hand, can be any of the register machine instructions.

The other subclass, manipulation of the execution's workflow, comprises the situations in which the computation of \(I\) does not follow the sequential disposition of the instructions, but changes the execution order under certain circumstances. For this, it is required the \texttt{JNZ} instruction, the exclusive instruction capable of performing such re-arrangements.

Dissecting the \texttt{JNZ} instruction (Definition~\ref{def:instructions}), it is possible to see it has two arguments, a register address and an instruction address, one used to perform an inequality comparison against zero of a register value and the other to redirect the current workflow in case the comparison results are true. \ie if the addressed register \(R_i\) does not have its value equal to zero; since this instruction is inherently composed of two operations, it persuades us to convert it in, at least, two MP rules.

\begin{figure}[h!t]
    \centering
    \caption{Flowchart of the \(\mathtt{JNZ(R_i, I_k)}\) instruction.}
    \includegraphics[scale=0.3]{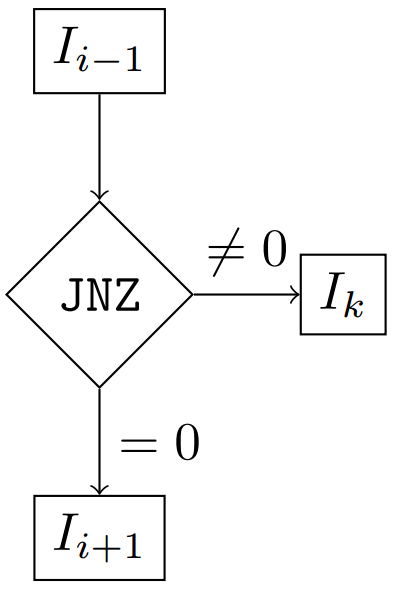}
    \label{fig:flowchart-jnz}
\end{figure}

At first, let us focus in the \emph{comparison} operation of \texttt{JNZ}. Looking at Figure~\ref{fig:flowchart-jnz}, it is easy to see the detour of the sequential instructions' execution happens when the value of the register \(R_i\) differs from zero; this remark suggests most of the trickery will be developed in order to make this execution branch feasible in MP systems.

The artifice is to create a metabolite \(L_j\) that represents the virtual comparison instruction of instruction \(I_j \equiv \mathtt{JNZ(R_i, I_k)}\) and, then, let it support the control of the redirection of the instructions' execution.

Thus, \(L_j\) must receive the token previously in \(I_j\) and, in your own computational step, choose whether \(I_{j+1}\) or \(I_k\) will be the next instruction to be carried out.

For the reception of the token, it follows the same mechanism previously used for sequential execution of the instructions: the \(I_j\) token becomes the \(L_j\) token with the simple rule \(I_j \rightarrow L_j\).

The choice of the instruction path, in turn, makes extensive use of \emph{regulators} (which will be discussed ahead in Section~\ref{sec:regulators-set}) in order to produce its rules. Intuitively, though, it is possible to develop them.

When the contents of \(R_i \neq 0\), \(L_j\) must redirect the execution flow to the instruction \(I_k\) through the transfer of its token to this new instruction; therefore, again, the sequential rule style operates and the produced rule is \(L_j \rightarrow I_k\).

However, when \(R_i = 0\), a problem arises: neither \(I_j \rightarrow I_{j+1}\) nor \(L_j \rightarrow I_{j+1}\) rules may be used to give, in the context of \emph{a metabolite becomes another metabolite}, \(I_{j+1}\) the execution token---the former (\(I_j\)) does not have tokens anymore and the latter (that would be the correct rule) was used to \emph{transform} the token from \(L_j\) to \(I_k\). The solution, then, is to create two rules to reproduce the behavior of \(L_j \rightarrow I_{j+1}\): one takes the token from \(L_j\) and the other gives it to \(I_{j+1}\).

\[
    L_j \rightarrow I_{j+1} \Rightarrow
    \begin{cases}
        L_j \rightarrow \emptyset \\
        \emptyset \rightarrow I_{j+1}
    \end{cases}
\]

At end, an instruction \(I_j \equiv \mathtt{JNZ(R_i, I_k)}\) becomes four MP rules summarized in Table~\ref{tab:mp-rules-jnz}.

\begin{table}[h!t]
    \centering
    \caption{MP rules for the \(I_j \equiv \mathtt{JNZ(R_i, I_k)}\).}
    \begin{tabular}{cc}
        \toprule
        Functional Description & MP Rule \\
        \midrule
        \begin{minipage}{0.5\linewidth}
            \centering
            Pass token to \emph{pointer of virtual comparison instruction} \(L_j\)
        \end{minipage} \vspace{1em}                                   & \(I_j \rightarrow L_j\)            \\

        If \(R_i \neq 0\), pass token to \(I_k\) \vspace{1em}         & \(L_j \rightarrow I_k\)             \\

        \multirow{2}{*}{If \(R_i = 0\), pass token to \(I_{j+1}\)}    & \(L_j \rightarrow \emptyset\)      \\
                                                                      & \(\emptyset \rightarrow I_{j+1}\)  \\
        \bottomrule
    \end{tabular}
    \label{tab:mp-rules-jnz}
\end{table}

\paragraph{Final State of the Computational Process}

At the end of a computational process, some kind of flag is necessary to signalize its termination, otherwise there is no way to guarantee we are seeing a partial result of the computation or the system's final state. Turing machines, in their formal definition~\cite{Lewis1997}*{Definition~4.1.1}, make this explicit signalization through the \emph{halting states} set, and the register machine uses its \texttt{HALT} instruction; these halting objects, as the name suggests, stops the computational device execution.

By the other hand, MP systems are feedback discrete dynamical systems and the concept of stopping the computation does not exist in this context. In other to imitate this behavior, it is possible to add to the systems' variables an additional state (which we will conveniently call it \emph{halt state}) which marks the end of the systems' computational process and, at the same time, is a \emph{fixed point} of the dynamical system. For this purpose, it is enough the addition of a single halt state that is the destination of the token from the \texttt{HALT} instructions and the absence of rules with the halt state in the left-hand side of it. Hence,

\begin{center}
    \begin{minipage}{0.8\linewidth}
        Let \(I_j\) be a \texttt{HALT} instruction. Then, there is a MP rule \(I_j \rightarrow HALT\) that models the end of the computational process with signalization of the final state.
    \end{minipage}
\end{center}

Clarifying the above definition, a register machine does not restricted your programs to have an unique \texttt{HALT} instruction and, thus, it may have several of them, all of them going to a single halt state. Therefore, the ``token'' of the \(j^{\text{th}}\) instruction \emph{becomes} the signal of end of computation and no other instruction is executed, because:
\begin{inparaenum}[(i)]
    \item the value (``token'') that controls the instruction flow is passed from the instructions (\(I_j\)) to the halt state (HALT) and
    \item there is no rules in which the halt state is consumed (\ie, in the left-hand side).
\end{inparaenum}

In fact, those two arguments also justify the halt of the device as a fixed point of the dynamical systems: with no rules being executed, no modification in the states are performed and, that being so, any other execution step of the MP system produces the same, previous state (\(S_{halt} \vdash S_{halt} \vdash \ldots \vdash S_{halt}\)).

\subsubsection{Regulators Set}
\label{sec:regulators-set}

In a MP system, the metabolites set defines the state variables of the system. The rules set, its skeleton behavior. But the update of the values of the state variables depends on mathematical functions associated with the rules, also known as \emph{regulators}.

The regulators computes the rates in which the metabolites in the left-hand side of rules are transformed into the ones in the right-hand side. In both mathematics and biology, regulators are often called \emph{flux}.

Since does not make sense to study regulators detached from rules, the classification of the equivalent regulators of a register machine will follow similar subdivision used for the rules: \emph{update of the registers' values}, \emph{update of the instruction pointers} and \emph{signalization of end of computation}.

\paragraph{Update of the Registers' Values}

The rules presented in the Table~\ref{tab:mp-rules-register} describes the process of updating a register value, but not the functions which actualizes it; these functions (the regulators) can be easily inferred from the specification of the register machine instructions.

Initially, the \texttt{INC} operation: it is a primitive recursive function which adds a single unit to the actual value, or \(\mathtt{INC(R_i)} = R_i + 1\), which can also be stated, in a regulator perspective, as ``\(\mathtt{INC(R_i)}\) changes the value of \(R_i\) with the constant rate of \(1\)''. The latter declaration, in turn, denotes the value of regulator as the constant \(1\); then, it is easy to define the regular for the \texttt{INC} instruction: \(\varphi_{\mathtt{INC}} = 1\).

Conversely, \texttt{DEC} follows a similar reasoning, \(\mathtt{DEC(R_i)} = R_i - 1\)\footnote{Actually, \(\mathtt{DEC(R_i)} = \max\{R_i - 1, 0\}\) since it is an operation \(\mathbb{N}\mapsto\mathbb{N}\). It is guaranteed by the \emph{positive control} of MPPC systems and will be discussed, again, further in this section.}, which can also be stated as ``\(\mathtt{INC(R_i)}\) changes the value of \(R_i\) with the constant rate of \(-1\)''. Nonetheless, it is also necessary to notice the \texttt{DEC} rule ``consumes'' the \(R_i\) value is  (because \(R_i\) is in the left-hand side of \(R_i \rightarrow \emptyset\)) and, hence, its flux is already negative. Thus, \(\varphi_{\mathtt{DEC}} = 1\).

\paragraph{Update of Instruction Pointers and Final State}

As with the case of the update of the registers' values, the behavior of the instruction pointers are well defined in the Section~\ref{sec:rules-flux-instruction}, but no modification to the state representing the instruction pointers are yet defined or performed. But before defining the regulators for their rules, it is important to reflect a little about the nature of the instruction pointers.

The instruction pointers states are flags that, when activated, indicates which of the instructions are being processed in the current computational step; they are the stratagem used to transform the parallel execution of rules in MP system (matrix multiplication at Definition~\ref{def:ema}) into sequential ones as present in the register machines.

The idea of the instruction pointers are similar to the \emph{program counter} of counter and random access machines, with the particularity that each instruction has its own instruction pointer, they are limited to binary values (deactivated and activated, respectively \(0\) and \(1\)) and no two instruction pointers may be activated at the same time, formally defined as\footnote{The presence of \(Halt\) is justified by the necessity of the system be in execution mode, not halted.} \(Halt + {\sum}{I_j} = 1\).

Thus, it is easy to define the regulators of the rules for sequential execution of instructions: for each rule of the form \(I_j \rightarrow I_{j+1}\) or \(I_j \rightarrow Halt\), the regulator is defined as \({\varphi}_{pointer} = I_j\).

(The \(Halt\) state variable may be seen as a particular case of the instruction pointer ones: it marks the end of the computational process instead of the instruction being computed.)

Finally, we must carefully design the regulators for the previously defined \(I_j \equiv \mathtt{JNZ(R_i, I_k)}\) rules. Thus, let us recall the mechanism of this function:

\begin{center}
    \begin{minipage}{0.8\linewidth}
        it compares the value of the register \(R_i\) against the number zero and if it is equal, the next instruction to be executed will be the sequential instruction \(I_{j+1}\), otherwise it will re-route the execution flow to the instruction \(I_k\).
    \end{minipage}
\end{center}

At first, then, \texttt{JNZ} delegates to the virtual comparison instruction \(L_j\) the comparison of the register \(R_i\) value against zero. Modeled by the rule \(I_j \rightarrow L_j\), it is actually a particular kind of sequential instruction execution, but \emph{inside} the \texttt{JNZ} rule. For this reason, its regulator follows the previous defined for the sequential execution rules and assumes the value of \(I_j\).

Next, there are two rules for the case in which \(R_i = 0\): one passes the token to the sequential instruction \(I_{j+1}\) and the other certifies the redirection to the \(I_k\) instruction will not ever occur. The former case, represented by the rule \(\emptyset \rightarrow I_{j+1}\), is a variation of the sequential execution when \(R_i = 0\), what drives us to consider \(I_j\) (sequential execution) and \(R_i\) (register value) as elements for the regulator. Observing the Table~\ref{tab:values-sequential-jnz} of the possible values of these metabolites, it is easy to infer its regulator as \(\varphi_{\emptyset \rightarrow I_{j+1}} = \max(I_j - R_i, 0)\).

\begin{table}[h!t]
    \centering
    \caption{Values of \(I_{j+1}\) depending on the values of \(I_j \equiv \mathtt{JNZ(R_i, I_k)}\) and \(R_i\).}
    \begin{tabular}{ccc}
        \toprule
        \(I_j\) & \(R_i\)       & \(I_{j+1}\)   \\
        \midrule
        0       & 0             & 0             \\
        0       & \(\neq 0\)    & 0             \\
        1       & 0             & 1             \\
        1       & \(\neq 0\)    & 0             \\
        \bottomrule
    \end{tabular}
    \label{tab:values-sequential-jnz}
\end{table}

For the latter of the rules when \(R_i = 0\), \(L_j \rightarrow \emptyset\), its regulator just have to control that the virtual comparison instruction \(L_j\) will not be active (\ie, its value will be zero) when \(I_{j+1}\) is; given that \(L_j\) is solely activated by the \(j^{\text{th}}\) instruction by a sequential-like rule, it is certainty it will not get any other increase of value outside the \texttt{JNZ} workflow and, hence, it has a maximum value of one. Then, it is trivial to infer that \(\varphi_{L_j \rightarrow \emptyset} = I_{j+1}\), establishing \(I_{j+1}\) as the repressor metabolite of \(L_j\).

Finally, the lasting rule of \texttt{JNZ} concerns to the re-route of the execution flow to the \(k^\text{th}\) instruction of the algorithm, \(I_k\). As previously state, a redirection of the flux of execution happens \emph{when the value of the register \(R_i\) is different of zero}, \ie, it is a ``secondary level'' task inside the \texttt{JNZ} workflow, which depends on the rules when \(R_i = 0\). By the other hand, the redirection happens, as state by its rule \(L_j \rightarrow I_k\), as a consequence of the activation of the virtual comparison instruction. Minimizing the metabolites set that influences the re-route sub-task of \texttt{JNZ} and combining all their possible values in the Table~\ref{tab:values-redirection-jnz}, the pattern that arises between this table and Table~\ref{tab:values-sequential-jnz} correctly induces to conclude the regulator for this last rule is \(\varphi_{L_j \rightarrow I_k} = \max\{L_j - I_{j+1}, 0\}\).

\begin{table}[h!t]
    \centering
    \caption{Values of \(I_k\) depending on the values of \(L_j\) and \(I_{j+1}\).}
    \begin{tabular}{ccc}
        \toprule
        \(L_j\) & \(I_{j+1}\)   & \(I_k\)   \\
        \midrule
        0       & 0             & 0         \\
        0       & 1             & 0         \\
        1       & 0             & 1         \\
        1       & 1             & 0         \\
        \bottomrule
    \end{tabular}
    \label{tab:values-redirection-jnz}
\end{table}

It is interesting to notice, however, that both \(\emptyset \rightarrow I_{j+1}\) and \(L_j \rightarrow I_k\) have regulators of the form \(\max\{X - Y, 0\}\), or subtraction in the natural set of the numbers, in order to restrict the regulators to become negative numbers. This guard, also present in the \texttt{DEC} instruction, provides two curious features to the system: 
\begin{enumerate}
    \item the comparison operator (\emph{greater-than}) required in \texttt{JNZ}, ensuring a positive, not null regulator if and only if \(X > Y\); and,
    \item preservation of the coherence of the rules.
    % \item no metabolite is consumed, in a metabolic P system, more than its quantity in the system.
\end{enumerate}
While the former characteristic is evident and does not require further explanation, the latter preserves the structural rules of the system: a negative regulator inverses the behavior of the associated rule, \ie \[\left( X \rightarrow Y : -\epsilon\right) \equiv \left( Y \rightarrow X : \epsilon\right)\] Hence, the concession of negative fluxes in the system would allow the modification of its set of the rules, with the modification of a certain rule \(r_i = \{X \rightarrow Y, Y \rightarrow X\}\) depending of its regulator---or, in other words, the system could represent two different programs depending of the positiveness of the regular of \(r_i\).

Furthermore, this restriction pattern goes in accordance with two other (and important) principles: the \emph{positive control}~\ref{def:mppc} of MP grammars, which asserts that fluxes must be greater or equal to zero, and \emph{computability over the natural numbers}~\cite{Lewis1997}*{Definition~4.4.1, Figure~4-19 and Section~4.7}~\cite{Shepherdson1963}*{Section~2}.

\subsubsection{Program and the Initial States}

In Section~\ref{sec:strategy} we introduced the concept of \emph{relaxed MP grammar} as a variation of the MP grammar definition without the requirements to set the initial states. This flexibility allowed to construct the equivalence instruction of register machine in MP as general as it is possible, without restraining the systems to a single, particular behavior: a relaxed grammar defines a whole class of dynamics based on general rules, not on specific simulations dependent of initial states values.

Although it is enough to define an equivalence with register (and Turing) machines through bijection of its composing elements, relaxed MP grammar does not allow the definition of the concept of \emph{program}~\cite{Lewis1997}*{pp.~210--211 and Definition~4.4.1}~\cite{Minsky1967}*{p.~202} because nothing such as \emph{initial configuration}~\cite{Lewis1997}*{Definition~4.4.2} is specified.

The initial configuration solely defines values for the states of the system and no modification of the designed model is undergone. It acts as a constrainer of the device model to ensure the correct execution of the dynamics.

Reviewing the states (metabolites) set of the MP system equivalent to the register machine, Section~\ref{sec:metabolites-set} and later addition in Section~\ref{sec:rules-flux-instruction}, it is possible to partition them under two categories: 
\begin{inparaenum}[(i)]
    \item the \emph{memory units}, which comprise the registers-equivalent metabolites (\(R_i\)); and, 
    \item the \emph{instruction pointers}, containing the instruction pointers \(I_j\), the halting state \(Halt\) and the virtual comparison instruction \(L_j\).
\end{inparaenum}
The first kind is, as in similar computational models~\cite{Hinrichsen2005}*{Section~2.1.1}~\cite{Minsky1967}*{Section~11.1}~\cite{Lewis1997}*{Section~4.4}, memory units that keeps the state of the internal computations but also serves as an interface for input (and, at the end of the computation process, output) data in the program; thus, it must always be initialized with zero values when representing internal computation states, while is open to be freely set to (positive) values when representing storage for input data.

The second kind, instruction pointers, are internal state that control the execution of the MP system as a sequential, computational device. The instruction pointers and virtual comparison instructions, together, represent the program counter~\cite{Lewis1997}*{Defintion~4.4.1} in the MP formalism while the halt state indicates the end of the computation.

In order to correctly execute a program from a register machine, nonetheless, it is also necessary to specify the states of the instruction pointers satisfying the two following constraints:
\begin{align}
    I_1 &= 1 \label{eq:first-instruction}\\
    \sum_{i = 1}^{\left|\mathcal{K}\right|}{K_i} &= 1, \text{ for any } t \in \mathbb{N} \text { and } K_i \in \mathcal{K} = I_j \cup L_j \cup Halt \label{eq:max-one-instruction}
\end{align}

Equation~\eqref{eq:first-instruction} defines a sequence \(S = (I_1, I_2, \ldots)\) for the instructions codified as MP rules merely with the definition of its first element, while the other sequential elements follows from the sequential execution rules (Section~\ref{sec:rules-flux-instruction}). Equation~\eqref{eq:max-one-instruction}, in turn, restricts the execution to solely one instruction (pointer) at each step, guarantees the device is either on work or halted and, finally, forces all the states in the instruction pointer category to have initial values equal to zero (except for \(I_1 = 1\)).

% \sum_{i = 1}^{\left|\mathcal{L}\right|}{L_i} &\leq 1, \text{ for any } t \in \mathbb{N} \text { and } L_i \in \mathcal{L} \label{eq:max-one-reroute}
%; the last one, Equation~\eqref{eq:max-one-reroute}, guarantees that at most one detour is active 
\section{Software}
\label{sec:software}

Following the theoretical development that derived the Theorem~\ref{thm:mr-to-mppc}, a piece of software were codified in order to provide experimental evidence of the validity of the theory, as well as a tool for automatic translation of register machines specifications into MP systems---which, later, expanded for also simulated both systems and ensure their equivalence. Here, though, the bisimulation is not verified, but instead a verification of the final state of both systems is performed.\footnote{The bisimulation is not necessary to be performed in the software due to the result of the Theorem~\ref{thm:mr-to-mppc} which states an equivalence. There is space, though, for this verification if the intermediate steps of both simulations are compared according to~\cite{Sangiorgi2012}*{Definition~1.4.2}.} However, if one of the systems does not halt, it is not possible to perform the comparison of the systems, a expect situation and in accordance to the \emph{halting problem}~\cite{Lewis1997}*{Section~5.3}~\cite{Sipser2012}*{Section~5.1}.

The existing software, a command-line tool available for the major operating systems players, can be divided in three main parts: \emph{translator}, \emph{executor} and \emph{comparator}, as represented in Figure~\ref{fig:software-dataflow}. This modular structure allows an unique flexible feature to the tool to choose the atomic operations to perform for a determined task; also, it provides the opportunity to implement parallel execution of modules and easy addition of new components.

\begin{figure}[h!t]
    \centering
    \caption{Dataflow of the software.}
    \includegraphics[width=0.9\linewidth]{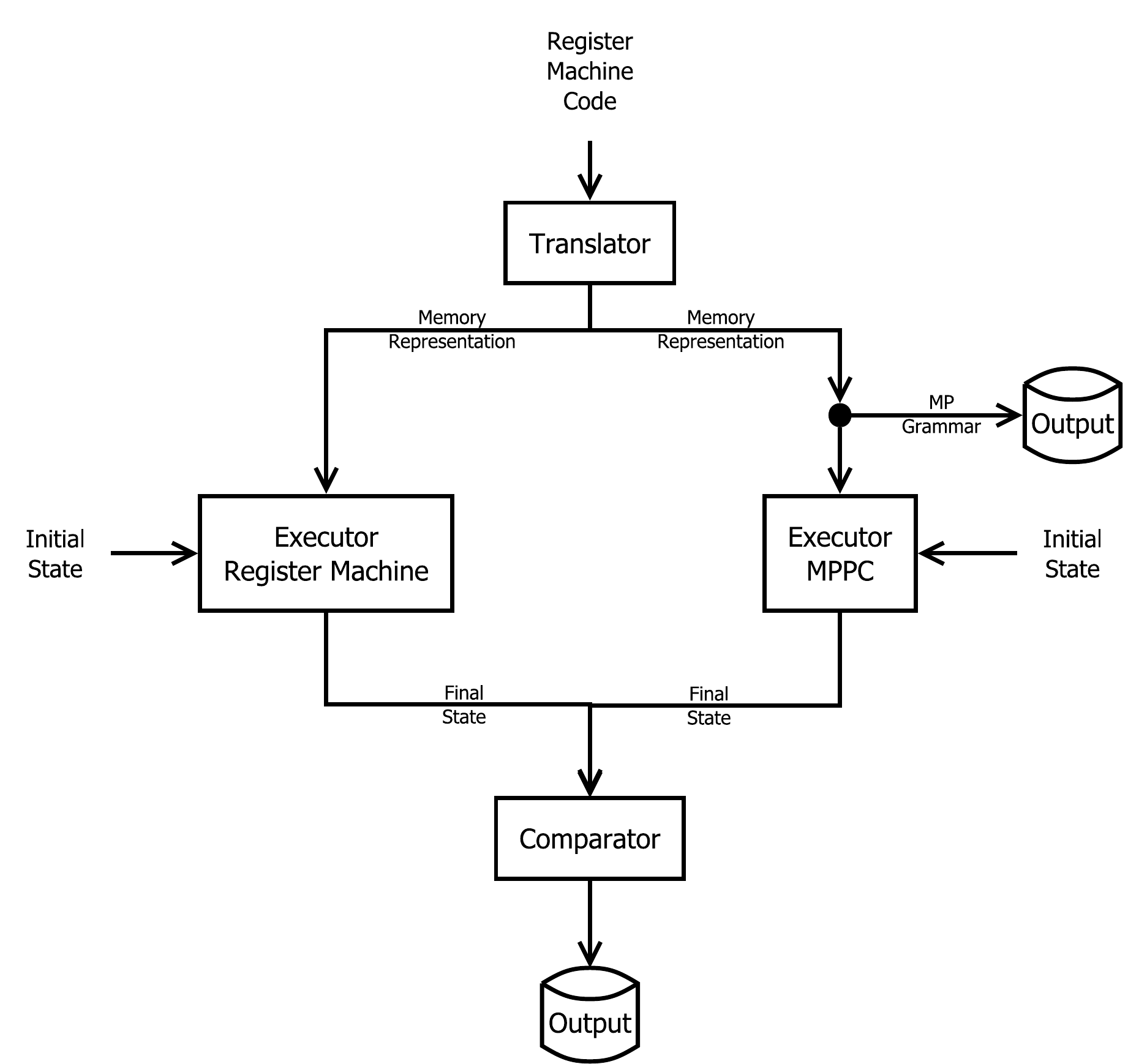}
    \label{fig:software-dataflow}
\end{figure}

% The real architeture, though, is a little more complex than the dataflow as suggests the Figure~\ref{fig:software-architecture}. 

% \begin{figure}[h!t]
%     \centering
%     \caption{Architecture of the software, highlighting the interconnection among modules or inside it when existing.}
%     \includegraphics[width=0.9\linewidth]{figure-software-architecture.pdf}
%     \label{fig:software-architecture}
% \end{figure}

When called by the command-line, the software may solely translate the register machine specification into a valid MP grammar (Figure~\ref{fig:software-screenshot-translator}) or generate the equivalent MPPC system to the register machine, simulate both of them and print if they are equivalent based in two criteria: both system halted and the final state of the registers and their respective metabolites in the MP grammar have the same values (Figure~\ref{fig:software-screenshot-simulation}).

\begin{figure}[h!t]
    \caption{Register machine specification for the \(\max(R_1, R_2)\) function.}
    \centering
    \begin{minipage}{15ex}
        \tt\small
        JNZ(1, 4)\\
        INC(4)\\
        HALT\\
        JNZ(2, 7)\\
        INC(3)\\
        HALT\\
        INC(5)\\
        DEC(1)\\
        DEC(2)\\
        JNZ(5, 1)
    \end{minipage}
\end{figure}

\begin{figure}[h!t]
    \centering
    \caption{Translation from register machine to MP grammar with the command-line tool.}
    \includegraphics[width=0.9\linewidth]{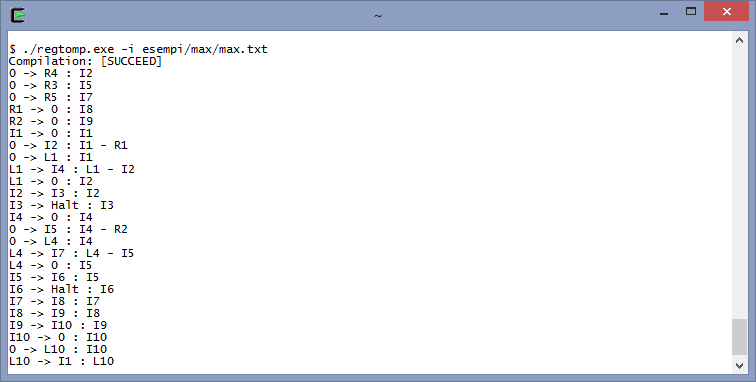}
    \label{fig:software-screenshot-translator}
\end{figure}

\begin{figure}[h!t]
    \centering
    \caption{Simulation and comparison of register machine and equivalent MP grammar with the command-line tool.}
    \includegraphics[width=0.9\linewidth]{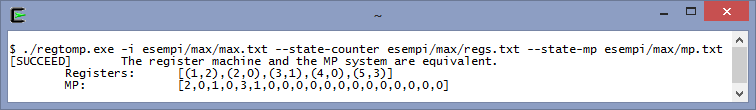}
    \label{fig:software-screenshot-simulation}
\end{figure}

In the previous example, both devices perform the computation \((R_1, R_2, R_3, R_4, R_5) = (5, 3, 0, 0, 0) \overset{*}{\vdash} (2, 0, 1, 0, 3)\) equally, writing \(1\) at \(R_3\) to indicate the value of the register \(R_1 \geq R_2\) (otherwise, \(R_4 = 1\) to indicate \(R_1 < R_2\)).
\section{Conclusion}
\label{sec:conclusion}

Among the several models of cells and metabolism, it is difficult to find one that naturally models the cellular system and provides significant insight to challenge the biological machinery against existing and well-known synthesis theory, such as digital circuits or computer programs. Thus, these models are, usually, very useful for the systems biology analysis, but they lack contribution for the counterpart synthetic biology.

The present work extends the existing Metabolic P system, extensively used for analyzing biological behavior, to computation theory and provides theoretical and software support to convert programmable instructions into positively controlled Metabolic P grammar. Particularly, it stands out from competing models because 
\begin{inparaenum}[(i)]
    \item MP grammar is the formalism for every metabolic process;
    \item it is a proven Turing-powerful model (under positively controlled restriction);
    \item MPPC mimics the cell behavior in the borderline situations;
    \item it provides software tools to transform software specification into MPPC description.
\end{inparaenum}

These features, when combined with hardware synthesis techniques~\cite{Pedroni2004} or database of metabolites (similar to~\cite{Beal2011a}), easily brings the ambition of lab-on-chip~\cite{Gravitz2009} and synthesis biology to the laboratories day-by-day reality.

\bibliographystyle{plain}
\bibliography{computational-universality-metabolic-computing}             % argument is your BibTeX string definitions and bibliography database(s)

\end{document}